\newtheorem{defn}{Definition}[section]
\newtheorem{thm}{Theorem}[section]
\newtheorem{prop}{Proposition}[section]
\newtheorem{cor}{Corollary}[section]
\newtheorem{rmk}{Remark}[section]
\newtheorem{lma}{Lemma}[section]
\newtheorem{exm}{Example}[section]
\def\N{{\rm I\kern-0.16em N}}
\def\R{{\rm I\kern-0.16em R}}
\def\E{{\rm I\kern-0.16em E}}
\def\P{{\rm I\kern-0.16em P}}
\def\F{{\rm I\kern-0.16em F}}
\def\B{{\rm I\kern-0.16em B}}
\def\C{{\rm I\kern-0.46em C}}
\def\G{{\rm I\kern-0.50em G}}
\newcommand{\ud}{\mathrm{d}}
\numberwithin{equation}{section}
\font\eka=cmex10
\def\ind{\mathrel{\hbox{\rlap{%
\hbox to 7.5pt{\hrulefill}}\raise6.6pt\hbox{\eka\char'167}}}}
\begin{document}
\title[Option prices with call prices]
{Option prices with call prices}
\author[Viitasaari]{Lauri Viitasaari}
\address{Department of Mathematics and System Analysis, Aalto University School of Science, Helsinki\\
P.O. Box 11100, FIN-00076 Aalto,  FINLAND} 

\begin{abstract}
There exist several methods how more general options can be priced with call prices. In this article, we extend these results to cover a wider class of options and market models. In particular, we introduce a new pricing formula which can be used to price more general options if prices for call
options and digital options are known for every strike price. Moreover, we derive similar results for barrier type options. As a consequence, we obtain a static hedging for general options in the general class of models.
Our result can be utilised in several significant applications. As a simple example, we derive an upper bound for the value of a general American 
option with convex payoff 
and characterise conditions under
which the value of this option equals to the value of the corresponding European
option.

\medskip

\noindent
{\it Keywords:} Option valuation, Barrier options, Call options, American options, static hedging

\smallskip

\noindent
{\it 2010 AMS subject classification:} 91G20 
\end{abstract}

\maketitle

\section{Introduction}
A major part of the research in valuating option is concentrating on considering options with
convex payoff function. Especially, the call option is widely studied in the literature. 
In this article, we show that in order to valuate and analyse a wide class
of options,
it is sufficient to valuate and analyse only the call and digital option on a certain market model. As our first main result, we show that if the payoff function is continuously differentiable except on at most countable set of points $(s_k)_{k=0}^N$ in which
the left- and the right
limits exists and are finite,
the discounted value $V_t^f$ at time $t$ of a European option with payoff
function $f(X_T)$ is given by
\begin{equation}
 \label{v_general}
\begin{split}
 V_t^f &= B_T^{-1}\E_Q[f(X_T)|\mathcal{F}_t]\\
 &= B_T^{-1}f(0) - \int_0^{\infty} f'(a)\lambda_t(\ud
a)\\
&+ B_T^{-1}\sum_{k=0}^{N}\Delta_-f(s_k)Q(X_T\geq s_k|\mathcal{F}_t)\\
&+ B_T^{-1}\sum_{k=0}^{N}\Delta_+f(s_k)Q(X_T> s_k|\mathcal{F}_t),
\end{split}
\end{equation}
where $B_T$ denotes the bond function, $Q$ denotes the pricing measure, $\lambda_t(a)$ is the discounted value of the call option $(X_T-a)^+$, $\Delta_-f(s_k) = f(s_k)-f(s_k-)$ is the jump on the left and $\Delta_+f(s_k) = f(s_k+)-f(s_k)$ the jump on the right, respectively. As our second main result we provide a similar pricing formula for barrier type options $f(X_T)\textbf{1}_{Y\in C}$, where $Y$ is a random variable (e.g. $Y$ can represent the supremum of the underlying process on $[0,T]$) and $C$ is a Borel set. 

Another problem besides valuation of options is to find a hedging strategy for certain option. One approach is to find a dynamical trading strategy that replicates the payoff. However, then hedging weights have to be adjusted repeatedly which causes problems especially under transaction costs. Another approach is static hedging in which the payoff is replicated with other options such that no re-adjustment is needed. As a consequence of our results, we obtain static hedge for more general options if digital and call options are traded in the market for every strike price. Moreover, our static hedge is applicable for a wide class of models.

The relation between call options and general options is already studied in the literature (for a survey, see section 2). However, many of the existing studies consider cases where either the state space is finite or
the time period is discrete. Our results cover a wider class of options and models than the results of the mentioned
articles. In particular, the only assumption we need is that the pricing is done by taking expectation with respect to some measure $Q$ which is equivalent to the fact that the model is, to some extent, free of arbitrage. For details in the mathematics of arbitrage, we refer to \cite{delbaen} and \cite{schachermayer}.

As an application, we give analogous result for convex payoff functions and derive relatively simple but, to the best of our knowledge, new results on the values of American options. More precisely, we answer to the following
questions:
\begin{enumerate}
 \item
under what conditions on the payoff $f$ and the underlying asset $X_t$, the price of an American
option $f(X_t)$ with
maturity $T$ equals to the price of a European option $f(X_t)$ with maturity $T$?
\item
if the prices are not equal, what is the difference between the prices and which factors
have
influence on the difference?
\end{enumerate}
It turns out that the answer to both questions are determined by the behaviour of the convex function $f$ at the origin whenever
the discounted asset process is a submartingale. In this article, we derive an analytical upper bound 
\begin{equation}
\label{v_upper_a_pos_intro}
\begin{split}
V_t^{f,A} &\leq V_t^f + f(0)_+(B_t^{-1}-B_T^{-1}) \\
&+ f'_+(0)_-
(\overline{X}_t -
\E_Q[\overline{X}_T|\mathcal{F}_t])
\end{split}
\end{equation}
where $A_+ = max(0,A)$, $B_-=min(0,B)$, $\overline{X}_t$ is the discounted underlying
asset, and $V_t^{f,A}$ and
$V_t^f$ denote the discounted values of American and European options.

The rest of the paper is organized as follows. In section 2, we give a short survey of previous studies. In section 3, we introduce the notation and state our main results. The applications are considered in section 4, where we derive the upper bound for the value of American options. Section 5 contains proofs and auxiliary lemmas that we need for the proofs.

\section{Survey of Previous Results}
The first study on the relation between call options and general options is by Breeden and Litzenberger
\cite{Breeden} who showed that the second derivative of a price of European call with
respect to its strike is the
pricing density for more general options provided that the second derivative
exists. We give their result in our notation:

If the second derivative $\lambda_0''(a)$ exists, then the price of European option $f(X_T)$ is given by
\begin{equation}
\label{BL}
V_0^f = \int_{0}^\infty f(a)\lambda_0''(a)\ud a.
\end{equation}

Bick
\cite{Bick} extended this result to a case where either the payoff function or the price
of a call has continuous
second derivative with respect to its strike price except in a finite set of points $(s_k)_{k=0}^N$ in which
the left- and right
derivatives exists and are finite. In particular, Bick showed that
\begin{equation}
\label{kaava_bick}
\begin{split}
 V_0^f &= B_T^{-1}f(0) + \int_0^{\infty} f''(a)\lambda_0(a)\ud a\\
&+ B_T^{-1}\sum_{k=0}^{N}\Delta_-f(s_k)Q(X_T\geq s_k)\\
&+ B_T^{-1}\sum_{k=0}^{N}\Delta_+f(s_k)Q(X_T> s_k)\\
&+ \sum_{k=0}^N (f'(s_k+)-f'(s_k-))\lambda_0(s_k).
\end{split}
\end{equation}

For studies on the relation between call options and general options, also see Jarrow \cite{Jarrow}, who derived a characterisation theorem for the distribution function of the underlying asset, and Brown and Ross \cite{Brown}, who consider a model with finite 
state space and showed that a wide class of
options are a portfolio of call options with different strike prices. 
Similarly, Cox and Rubinstein \cite{Cox} introduced a
method to approximate continuous
functions with piecewise
linear functions, which are a portfolio of call options with different
strikes. They also considered the
pricing error of this approximation, and suggested that one should find an approximation
which is the best in the
sense of maximum absolute difference. However, this may cause problems when considering
infinite state space. 

Static hedging problem has increased its popularity as a subject of research and there exists numerous studies (which we won't list here) with different assumptions. For instance, see Carr and Picron \cite{Carr} and references therein. 

Analytical upper bounds for prices for American options have been developed first by Chen and Yeh \cite{Chen},
then extended by Chang and
Chung \cite{Chang}. However, they do not discuss the properties of the payoff function
itself but instead give
conditions on the value process.

\section{Results}
Let $S_t$ denote the stock price process (or $S_t^k$ in case of several risky assets)
and $X_t$ the underlying asset of an option. As examples, $X_t$ can be
a functional of $S_t$ like the average $X_t = \frac{1}{t}\int_0^t S_u\ud u$
representing Asian option, $X_t = \sup_{u\leq t}S_u$ representing Lookback option, $X_t =
\max_{1\leq k\leq d}S_t^k$
 representing Rainbow option or $X_t = \sum_{k=1}^d \alpha_kS_t^k$ representing Basket
option. Throughout the article,
 $\overline{X}_t$ denotes the discounted value of $X_t$ i.e.
$
\overline{X}_t = B_t^{-1}X_t,
$
where the bond is given by an non-decreasing function $B_t$ with $B_0=1$. The expectation with respect to risk-neutral measure $Q$ is denoted by $\E_Q$. In general, we also
use the terms 'positive' and 'increasing' as synonyms for 'non-negative' and
'non-decreasing'. Similarly, the symbol $\R_+$ refers
to positive real numbers including zero. We also use term the value at time $t$ of an option $f(X)$ with maturity $T$ as synonym for the discounted value at time $t$, and this value is denoted by $V_t^f$ (or $V_t^{f,A}$ for American options, respectively). We omit the dependence on the measure $Q$ on the notation despite the fact that the measure is not necessarily unique. We also omit 
$t$ on the notation whenever we consider the price of the option, i.e. the value at time $t=0$.

We assume that on a certain market model, we are given the underlying asset $X_t$ and the equivalent martingale measure $Q$. We consider the following class of payoff functions.
\begin{defn}
For a function $f:\R_+\rightarrow\R$, we denote $f\in\Pi_Q(X_T)$ if the following conditions are satisfied:
\begin{enumerate}
\item
$f$ is continuously differentiable except on at most countable set of points
$0<s_1<s_2<\ldots<s_N$ (and possibly on $s_0=0$) in which $f$ and $f'$ have jump-discontinuities,
\item
$f(X_T)\in L^1(Q)$,
\item
$f$ satisfies
\begin{equation}
\label{cond1}
\lim_{x\rightarrow\infty}|f(x-)|Q(X_T \geq x) = 0
\end{equation}
and,
\item
the stochastic Riemann-Stieltjes integral
$$
\int_0^{\infty}f'(a)\lambda_t(\ud a)
$$
exists for every $t\in[0,T]$, where
\begin{equation}
\label{mitta_call2}
\lambda_t(a) = \E_Q[B_T^{-1}(X_T-a)^+|\mathcal{F}_t]
\end{equation}
denotes the value of the European call option at $t$ with strike $a$.
\end{enumerate}
\end{defn}
The technical assumptions are not very restrictive.
Indeed, the function $\lambda_t(a)$ is absolutely continuous
and $f'$ has only jump discontinuities. Thus the Riemann-Stieltjes
integral is well-defined on every interval $[0,N]$. Moreover, it coincides with the Lebesgue integral, and the derivative of $\lambda_t(a)$ is the value of a negative digital option with payoff $-\textbf{1}_{x\geq a}$. Note also that 
condition (\ref{cond1}) implies that
$\lim_{x\rightarrow\infty}|f(x-)|Q(X_T \geq x|\mathcal{F}_t) = 0$ almost surely for every $t\in[0,T]$.

For Barrier type options, we consider the class $\Pi_Q(X_T,Y,C)$ and denote $f\in \Pi_Q(X_T,Y,C)$, if: 
\begin{enumerate}
\item
the assumptions $(1)$ and $(2)$ of the class $\Pi_Q(X_T)$ are satisfied, 
\item 
$$
\lim_{x\rightarrow\infty}|f(x-)|Q(X_T \geq x, Y\in C) = 0
$$
and,
\item
the stochastic Riemann-Stieltjes integral
$$
\int_0^{\infty}f'(a)\lambda_t^{Y,C}(\ud a)
$$
exists for every $t\in[0,T]$, where
\begin{equation}
\label{mitta_call3}
\lambda_t^{Y,C}(a) = \E_Q[B_T^{-1}(X_T-a)^+\textbf{1}_{Y\in C}|\mathcal{F}_t].
\end{equation}
\end{enumerate}
Now we can state our main theorems:
\begin{thm}
\label{theor_v_general}
Let $f\in\Pi_Q(X_T)$. 
Then the value of an option $f(X_T)$ at time $t$ is given by
\begin{equation}
 \label{v_general2}
\begin{split}
 V_t^f &= B_T^{-1}f(0) - \int_0^{\infty} f'(a)\lambda_t(\ud
a)\\
&+ B_T^{-1}\sum_{k=0}^{N}\Delta_-f(s_k)Q(X_T\geq s_k|\mathcal{F}_t)\\
&+ B_T^{-1}\sum_{k=0}^{N}\Delta_+f(s_k)Q(X_T> s_k|\mathcal{F}_t),
\end{split}
\end{equation}
where 
$\Delta_-f(s_k) = f(s_k)-f(s_k-)$ and $\Delta_+f(s_k)=f(s_k+)-f(s_k)$. The jump from the left at zero is defined as $\Delta_-f(0)=0$.
\end{thm}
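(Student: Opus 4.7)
My plan is to derive a deterministic decomposition of $f$ into its absolutely continuous part and its jumps, substitute $X_T$, take discounted conditional expectations, and match each resulting term with the right-hand side of \eqref{v_general2}.

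First I would establish the pointwise identity
\begin{equation*}
f(x) = f(0) + \int_0^x f'(a)\,\mathrm{d}a + \sum_{k} \Delta_-f(s_k)\mathbf{1}_{s_k\leq x} + \sum_{k} \Delta_+f(s_k)\mathbf{1}_{s_k<x},\qquad x\geq 0,
\end{equation*}
valid for any $f$ satisfying condition $(1)$ of the definition. This follows from the fundamental theorem of calculus applied on each open interval of continuity, combined with a case analysis at $x=s_k$: for $x<s_k$ no jump term activates; for $x=s_k$ exactly $\Delta_-f(s_k)=f(s_k)-f(s_k-)$ contributes; and for $x>s_k$ both contributions sum to the total jump $f(s_k+)-f(s_k-)$. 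Absolute convergence of the jump sums at $x=X_T$ is guaranteed by condition $(2)$.

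Substituting $x=X_T$, dividing by $B_T$, and taking $\E_Q[\,\cdot\,|\mathcal{F}_t]$ then yields the jump contributions
\begin{equation*}
B_T^{-1}\sum_k \Delta_-f(s_k)\,Q(X_T\geq s_k|\mathcal{F}_t) + B_T^{-1}\sum_k \Delta_+f(s_k)\,Q(X_T> s_k|\mathcal{F}_t),
\end{equation*}
after an application of Fubini--Tonelli to interchange the sum and the expectation. The integral contribution becomes
\begin{equation*}
B_T^{-1}\E_Q\Bigl[\int_0^{X_T} f'(a)\,\mathrm{d}a\,\Big|\,\mathcal{F}_t\Bigr] = \int_0^\infty f'(a)\,B_T^{-1}Q(X_T> a|\mathcal{F}_t)\,\mathrm{d}a,
\end{equation*}
again by Fubini, with the tail at infinity controlled by condition \eqref{cond1}. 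To identify this integral with $-\int_0^\infty f'(a)\lambda_t(\mathrm{d}a)$, I would write $(X_T-a)^+=\int_a^\infty \mathbf{1}_{X_T>x}\,\mathrm{d}x$ and take conditional expectation to obtain
\begin{equation*}
\lambda_t(a) = \int_a^\infty B_T^{-1}Q(X_T>x|\mathcal{F}_t)\,\mathrm{d}x,
\end{equation*}
so that $\lambda_t$ is absolutely continuous in $a$ with Lebesgue density $-B_T^{-1}Q(X_T>a|\mathcal{F}_t)$. Since $f'$ has at most countably many jump discontinuities, the stochastic Riemann--Stieltjes integral required to exist in condition $(4)$ coincides with the corresponding Lebesgue integral, and the two expressions agree.

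The main obstacle is the rigorous handling of the interchanges of limits on the unbounded interval $[0,\infty)$ and over the possibly countably infinite jump set. The tail estimate \eqref{cond1} is precisely what lets one truncate at $a=R$, let $R\to\infty$, and discard the boundary term $f(R-)Q(X_T\geq R|\mathcal{F}_t)$; together with condition $(2)$ it also dominates the jump series for Fubini--Tonelli. The verification that the Riemann--Stieltjes prescription of condition $(4)$ matches the Lebesgue density computation at points where $f'$ jumps is the subtlest technical point and is the place where one would invoke the auxiliary lemmas promised in Section~5.
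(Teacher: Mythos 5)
Your argument is correct in substance but takes a genuinely different route from the paper's. The paper never decomposes $f$ analytically: it first proves the formula for payoffs $f(x)\textbf{1}_{\alpha\leq x\leq\beta}$ supported on a bounded interval (Lemma \ref{lemma_v_general} and Lemma \ref{rema_apu}) by approximating $f$ there with piecewise linear functions, i.e.\ explicit portfolios of call spreads $p(x,a_k)-p(x,a_{k+1})$ plus digital options, taking expectations of these portfolios, and recognising $\sum_k c_k[\lambda(a_k)-\lambda(a_{k+1})]$ as a Riemann--Stieltjes sum converging to $-\int f'(a)\lambda_t(\ud a)$ via the mean value theorem; the theorem then follows by splitting $[0,b)$ at the jump points and letting $b\to\infty$ with dominated convergence and conditions $(3)$--$(4)$ --- which is exactly your truncation step. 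Your route --- fundamental theorem of calculus plus jump indicators, conditional Fubini, and the identification $\lambda_t(\ud a)=-B_T^{-1}Q(X_T>a|\mathcal{F}_t)\,\ud a$ --- is precisely the alternative the author mentions and declines in the remark following Lemma \ref{lemma_v_general} (``the result can also be obtained by using integration by parts and (\ref{abs_leb})''). What the paper's approach buys is that the portfolio identity (\ref{appro2}) can simply be multiplied by $\textbf{1}_{Y\in C}$ to yield Theorem \ref{theor_barrier} with no additional work, and it exhibits the static hedge as a limit of tradeable portfolios; what yours buys is that the pointwise decomposition of $f$ is exactly the static-hedging portfolio of the paper's first remark after Theorem \ref{theor_v_general}, obtained in one line.

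Two steps need more care than you give them, though neither is fatal and both appear at a comparable level of informality in the paper. First, Fubini--Tonelli on all of $[0,\infty)$ requires $\int_0^\infty |f'(a)|\,Q(X_T>a|\mathcal{F}_t)\,\ud a<\infty$, which the hypotheses do not supply: condition $(4)$ only asserts existence of the improper Riemann--Stieltjes integral. The interchange must therefore be carried out for the truncated payoff $f(X_T)\textbf{1}_{X_T<R}$, where $f'$ is bounded away from the finitely many jump points below $R$, and only then does one let $R\to\infty$ using \eqref{cond1} and condition $(4)$; your final paragraph shows you are aware of this, but the body of your proof applies Fubini globally. Second, condition $(2)$ does not control $\sum_k|\Delta_\pm f(s_k)|$, so it cannot be what justifies interchanging the jump sums with the expectation; what makes those sums harmless is again that only finitely many $s_k$ lie below the truncation level --- an assumption the paper's own proof also makes implicitly when it writes $g_b$ as a finite sum with $s_{n+1}=b$.
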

\begin{exm}
\label{exm_power}
As a non-trivial example, consider a power call option for which the payoff is given by $f(x)=(x^n-K)^+$ for some integer $n$. In this case the value of $f(X_T)$ at $t$ is given by
\begin{equation*}
V_t^f = - \int_{K^{\frac{1}{n}}}^{\infty} na^{n-1}\lambda_t(\ud
a).\\
\end{equation*}
\end{exm}
\begin{rmk}
Since $\lambda_t(a)$ is absolutely continuous with respect to $a$, we have
\begin{equation}
\label{abs_leb}
\int_0^{\infty} f'(a)\lambda_t(\ud a) = -\int_0^{\infty} f'(a)B_T^{-1}Q(X_T>a|\mathcal{F}_t)
\ud a.
\end{equation}
Hence the value process $V_t$ of a portfolio
\begin{equation*}
\begin{split}
& f(0) + \int_0^{\infty} f'(a)I_{X_T>a}\ud a\\
&+ \sum_{k=0}^{N}\Delta_-f(s_k)I_{X_T\geq s_k}\\
&+ \sum_{k=0}^{N}\Delta_+f(s_k)I_{X_T>s_k}
\end{split}
\end{equation*}
equals to the value process $V_t^f$ of $f(X_T)$ almost surely for every $t$. Hence
we obtain a static hedging for $f(X_T)$ if we have access to digital options for every strike price. 
\end{rmk}
\begin{rmk}
Note that if the derivative $f'$ is absolutely continuous on every interval $(s_k,s_{k+1})$, then the second derivative of $f$ exists for almost every $a$ and integration by
parts gives the formula (\ref{kaava_bick}). In this case we find a static hedging strategy by investing to call options too. Similarly, if $Q(X_T\geq a|\mathcal{F}_t)$ is absolutely continuous, integration by parts gives (\ref{BL}).
\end{rmk}
\begin{rmk}
If $f$ is a linear combination of convex functions, the integration by parts yields
\begin{equation}
\label{v_conv_e}
 V_t^f = B_T^{-1}f(0) + f'_+(0)\E_Q[\overline{X}_T|\mathcal{F}_t] +
\int_0^\infty \lambda_t(a)\mu(\ud a),
\end{equation}
where $\mu$ is the measure associated with the second derivative of $f$. This also follows directly from the well-known representation \cite{Protter}
\begin{equation}
\label{rep_conv}
f(x) = f(0)+f'_+(0)x+\int_0^\infty (x-a)^+\mu(\ud a)
\end{equation}
for convex functions $f:\R_+\rightarrow \R$. Also note that the formula (\ref{v_conv_e}) holds for every twice continuously differentiable function $f$ (see \cite{Carr}). 
\end{rmk}
\begin{thm}
\label{theor_barrier}
Let $f\in\Pi_Q(X_T,Y,C)$. Then the value $V_t^{f,Y,C}$ of the barrier option $f(X_T)\textbf{1}_{Y\in C}$ at time $t$ is given by
\begin{equation}
 \label{v_general3}
\begin{split}
 V_t^{f,Y,C} &= B_T^{-1}f(0)Q(Y\in C|\mathcal{F}_t) - \int_0^{\infty} f'(a)\lambda_t^{Y,C}(\ud
a)\\
&+ B_T^{-1}\sum_{k=0}^{N}\Delta_-f(s_k)Q(X_T\geq s_k, Y\in C|\mathcal{F}_t)\\
&+ B_T^{-1}\sum_{k=0}^{N}\Delta_+f(s_k)Q(X_T> s_k, Y\in C|\mathcal{F}_t).
\end{split}
\end{equation}
\end{thm}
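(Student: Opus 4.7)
The plan is to mirror the proof of Theorem~\ref{theor_v_general}, inserting the indicator $\textbf{1}_{Y\in C}$ at every step. First, I would establish the pathwise decomposition
\begin{equation*}
f(x) = f(0) + \int_0^{x} f'(a)\,\ud a + \sum_{k=0}^{N}\Delta_-f(s_k)\textbf{1}_{\{x\geq s_k\}} + \sum_{k=0}^{N}\Delta_+f(s_k)\textbf{1}_{\{x>s_k\}}, \quad x\in\R_+,
\end{equation*}
valid for every $f$ whose derivative has at most countably many jump discontinuities (with the convention $\Delta_-f(0)=0$). This is the standard decomposition of $f$ into its absolutely continuous and pure-jump parts and should already be available from the proof of Theorem~\ref{theor_v_general}, whose hypotheses on $f$ coincide with condition~(1) of the barrier class.

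Substituting $x=X_T$, multiplying by $B_T^{-1}\textbf{1}_{Y\in C}$ and taking $\E_Q[\,\cdot\,|\mathcal{F}_t]$ produces at once the constant term $B_T^{-1}f(0)Q(Y\in C|\mathcal{F}_t)$ and the two jump sums appearing in~\eqref{v_general3}; the integrability needed to exchange the conditional expectation with the countable sums of jumps is supplied by $f(X_T)\in L^1(Q)$ together with the tail control of condition~(2) of $\Pi_Q(X_T,Y,C)$, via the same monotone/dominated convergence argument used in Theorem~\ref{theor_v_general}.

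The only genuine calculation is the middle term. Writing $\int_0^{X_T}f'(a)\,\ud a = \int_0^{\infty}f'(a)\textbf{1}_{\{a<X_T\}}\,\ud a$ and applying a conditional Fubini argument yields
\begin{equation*}
\E_Q\left[B_T^{-1}\textbf{1}_{Y\in C}\int_0^{X_T}f'(a)\,\ud a\,\Big|\,\mathcal{F}_t\right] = \int_0^{\infty} f'(a)\,B_T^{-1}Q(X_T>a,\,Y\in C|\mathcal{F}_t)\,\ud a.
\end{equation*}
On the other hand, from $(X_T-a)^+=\int_a^{\infty}\textbf{1}_{\{u<X_T\}}\,\ud u$ and Fubini again, $\lambda_t^{Y,C}$ is absolutely continuous in $a$ with density $-B_T^{-1}Q(X_T>a,\,Y\in C|\mathcal{F}_t)$; hence the displayed right-hand side equals $-\int_0^{\infty}f'(a)\,\lambda_t^{Y,C}(\ud a)$, which is well defined by condition~(3) of the class. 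Assembling the four contributions produces~\eqref{v_general3}.

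The main obstacle is the rigorous justification of the two applications of Fubini and of the exchange of the conditional expectation with the countable sums of jumps under only the hypotheses of $\Pi_Q(X_T,Y,C)$. Conditions~(2) and~(3) of the definition are designed precisely for this purpose: the tail condition~(2) controls the contribution of $\{a\geq N\}$ as $N\to\infty$, while~(3) ensures the limiting Riemann--Stieltjes integral is meaningful. Since $Q(X_T\geq x,\,Y\in C|\mathcal{F}_t)\leq Q(X_T\geq x|\mathcal{F}_t)$ pointwise, the tail estimates used in Theorem~\ref{theor_v_general} transfer verbatim after inserting the indicator, so no essentially new analytic ingredient is required beyond what was already used for the parent theorem.
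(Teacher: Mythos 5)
Your argument is correct in substance, but it follows a genuinely different route from the paper. The paper's proof never writes down a pathwise decomposition of $f$: it multiplies the piecewise-linear call-and-digital approximation (\ref{appro2}) of $f$ on each interval $[\alpha,\beta]$ by $\textbf{1}_{Y\in C}$, takes expectations, and recovers the Riemann--Stieltjes integral $\int f'(a)\lambda_t^{Y,C}(\ud a)$ as a limit of sums $\sum c_k[\lambda^{Y,C}(a_k)-\lambda^{Y,C}(a_{k+1})]$ via the mean value theorem, exactly as in Lemmas \ref{lemma_v_general} and \ref{rema_apu}; the pieces are then assembled and $b\to\infty$ is taken as in Theorem \ref{theor_v_general}. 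That route has the virtue of exhibiting the approximating static portfolio explicitly and of needing only the existence of the improper Riemann--Stieltjes integral assumed in the definition of the class. Your route --- decompose $f$ pathwise into absolutely continuous and jump parts, multiply by $B_T^{-1}\textbf{1}_{Y\in C}$, and apply conditional Fubini using the absolute continuity of $\lambda_t^{Y,C}$ --- is essentially the ``integration by parts plus (\ref{abs_leb})'' alternative that the paper itself mentions in the remark after Lemma \ref{lemma_v_general} and deliberately avoids. It is cleaner once set up, but it carries two burdens you should discharge explicitly: first, the identity $f(x)=f(0)+\int_0^x f'(a)\,\ud a+\sum_k\Delta_-f(s_k)\textbf{1}_{x\geq s_k}+\sum_k\Delta_+f(s_k)\textbf{1}_{x>s_k}$ uses the fundamental theorem of calculus on each open interval $(s_k,s_{k+1})$, where the integral is a priori only an improper one; second, the conditional Fubini step over all of $[0,\infty)$ presumes an absolute integrability that the class $\Pi_Q(X_T,Y,C)$ does not grant --- condition (3) only asserts existence of the improper Riemann--Stieltjes integral. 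Both issues are resolved by first truncating at a finite level $b$ (where Fubini is harmless and the $s_k$ in $[0,b]$ are finitely many) and then letting $b\to\infty$ using condition (2) for the boundary term $f(b-)Q(X_T\geq b, Y\in C|\mathcal{F}_t)$ and condition (3) for the integral, which is precisely the limiting scheme of the paper's proof; you gesture at this but should carry the truncation through rather than applying Fubini on the half-line directly.
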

\begin{rmk}
Note that we can obtain formulas corresponding to (\ref{kaava_bick}) and (\ref{v_conv_e}) for barrier type options as well.
\end{rmk}
\begin{rmk}
In both of our main theorems, we assumed that the bond is a deterministic function. However, 
we may allow the bond to be an adapted, increasing process with obvious modifications in the theorems.
\end{rmk}

\section{Upper Bound for American Option Values}
\begin{prop}
\label{prop_v_a_pos}
Assume that the payoff function $f$ is convex and assume that the discounted underlying
asset
$\overline{X}_t$ is a submartingale.
Then the value of an American option $f(X_t)$ has an upper bound
\begin{equation}
\label{v_upper_a_pos}
\begin{split}
V_t^{f,A} &\leq V_t^f + f(0)_+(B_t^{-1}-B_T^{-1}) \\
&+ f'_+(0)_-
(\overline{X}_t -
\E_Q[\overline{X}_T|\mathcal{F}_t])
\end{split}
\end{equation}
where $A_+ = max(0,A)$ and $B_-=min(0,B)$.
\end{prop}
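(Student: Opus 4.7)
The plan is to use the convex representation (\ref{rep_conv}) to decompose the payoff as a positive mixture of a constant, the identity and calls, and then bound the American value termwise against the European value. For any stopping time $\tau\in[t,T]$, writing
\begin{equation*}
f(X_\tau)=f(0)+f'_+(0)X_\tau+\int_0^\infty (X_\tau-a)^+\mu(\ud a),
\end{equation*}
multiplying by $B_\tau^{-1}$, taking conditional expectation and applying Fubini to the non-negative $\mu$-integral yields
\begin{equation*}
\E_Q[B_\tau^{-1}f(X_\tau)|\mathcal{F}_t]=f(0)\E_Q[B_\tau^{-1}|\mathcal{F}_t]+f'_+(0)\E_Q[\overline{X}_\tau|\mathcal{F}_t]+\int_0^\infty\E_Q[B_\tau^{-1}(X_\tau-a)^+|\mathcal{F}_t]\mu(\ud a).
\end{equation*}

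For the first piece, since $B$ is deterministic and non-decreasing one has $B_T^{-1}\leq B_\tau^{-1}\leq B_t^{-1}$, so splitting $f(0)=f(0)_++f(0)_-$ and handling each sign separately gives $f(0)\E_Q[B_\tau^{-1}|\mathcal{F}_t]\leq f(0)_+B_t^{-1}+f(0)_-B_T^{-1}$. For the second piece, optional stopping for the submartingale $\overline{X}$ yields $\overline{X}_t\leq\E_Q[\overline{X}_\tau|\mathcal{F}_t]\leq\E_Q[\overline{X}_T|\mathcal{F}_t]$, and splitting $f'_+(0)$ into its positive and negative parts gives $f'_+(0)\E_Q[\overline{X}_\tau|\mathcal{F}_t]\leq f'_+(0)_+\E_Q[\overline{X}_T|\mathcal{F}_t]+f'_+(0)_-\overline{X}_t$.

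The key step is the third piece. I would show that for each $a\geq 0$ the discounted call $B_t^{-1}(X_t-a)^+=(\overline{X}_t-aB_t^{-1})^+$ is itself a $Q$-submartingale: the process $\overline{X}_t-aB_t^{-1}$ is a submartingale (it is the sum of the submartingale $\overline{X}_t$ and the non-decreasing deterministic drift $-aB_t^{-1}$, using $a\geq 0$), and $x\mapsto x^+$ is convex and non-decreasing, so conditional Jensen applies. Optional stopping then gives $\E_Q[B_\tau^{-1}(X_\tau-a)^+|\mathcal{F}_t]\leq\lambda_t(a)$, and integrating against the positive measure $\mu$ yields $\int_0^\infty\E_Q[B_\tau^{-1}(X_\tau-a)^+|\mathcal{F}_t]\mu(\ud a)\leq\int_0^\infty\lambda_t(a)\mu(\ud a)$.

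Combining the three bounds, the right-hand side no longer depends on $\tau$, so taking the essential supremum over $\tau\in[t,T]$ preserves the inequality. Subtracting the European value from the representation (\ref{v_conv_e}) and using $f(0)_++f(0)_-=f(0)$ and $f'_+(0)_++f'_+(0)_-=f'_+(0)$ collapses the cross terms, leaving precisely $f(0)_+(B_t^{-1}-B_T^{-1})+f'_+(0)_-(\overline{X}_t-\E_Q[\overline{X}_T|\mathcal{F}_t])$, which is (\ref{v_upper_a_pos}). The main obstacle is the submartingale property of the discounted call $B_t^{-1}(X_t-a)^+$, but once the sign condition $a\geq 0$ is spotted it reduces to a standard application of Jensen — it is the same mechanism by which an American call equals its European counterpart on a non-dividend-paying asset under non-negative interest.
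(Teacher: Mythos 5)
Your proposal is correct and follows essentially the same route as the paper: decompose $f$ via the convex representation (\ref{rep_conv}), bound the constant, linear and call pieces termwise over stopping times, and subtract the European value (\ref{v_conv_e}); your sign-splitting of $f(0)$ and $f'_+(0)$ reproduces the paper's computation of the two correction terms exactly. The only difference is that where the paper imports the equality of American and European call values as Lemma \ref{lem_call_a} (citing Chesney et al.), you prove it inline by noting that $(\overline{X}_t-aB_t^{-1})^+$ is a submartingale (convex non-decreasing function of a submartingale) and applying optional sampling, which makes the argument self-contained without changing its substance.
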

Note that the Proposition \ref{prop_v_a_pos} can be generalised to a case where the bond is an
adapted, increasing process $B_t$ with $B_0=1$ instead
of deterministic function. In this case, we have to replace the term $B_T^{-1}$ on the upper
bound by conditional
expectation $\E_Q[B_T^{-1}|\mathcal{F}_t]$.

The upper bound becomes impractical if the ratio
\begin{equation}
\label{ratio}
\frac{f(0)_+(B_t^{-1}-B_T^{-1}) + f'_+(0)_- (\overline{X}_t -
\E_Q[\overline{X}_T|\mathcal{F}_t])}{V_t^f}
\end{equation}
is large. For example, the upper bound for the price for put option is inefficient if the option is out 
of the money. However, as a direct corollary
we obtain a useful result which gives sufficient conditions under which the values of
European and American
options are equal. The result in a case where $X_t$ is the stock itself and $f(0)=0$ is already proved in \cite{Shreve}.
\begin{cor}
\label{cor_v_a_pos}
Assume that $f$ is convex and $\overline{X}_t$ is a submartingale. Then the value of an
American option
$f(X_t)$ at time $t$ equals to the value of its European counterpart at time $t$ if the following two conditions hold:
\begin{enumerate}
 \item
bond $B_t$ is a constant 1 or $f(0) \leq 0$,
\item
discounted underlying asset $\overline{X}_t$ is a martingale or $f'_+(0) \geq 0$.
\end{enumerate}
\end{cor}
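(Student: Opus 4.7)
The plan is to derive the corollary as an immediate consequence of Proposition \ref{prop_v_a_pos} combined with the trivial inequality $V_t^{f,A}\geq V_t^f$, which holds because the stopping rule $\tau\equiv T$ is admissible for the American option and yields exactly the European payoff discounted at $T$. Hence it suffices to verify that under the stated hypotheses the two correction terms on the right-hand side of \eqref{v_upper_a_pos} both vanish, which would sandwich $V_t^{f,A}$ between $V_t^f$ and itself.

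First I would handle the bond term $f(0)_+(B_t^{-1}-B_T^{-1})$. If $B_t\equiv 1$, then $B_t^{-1}-B_T^{-1}=0$ identically, killing the term; if instead $f(0)\leq 0$, then $f(0)_+=\max(0,f(0))=0$, again killing the term. So condition (1) in either of its two forms forces this contribution to be zero. Next I would handle the drift term $f'_+(0)_-(\overline{X}_t-\E_Q[\overline{X}_T|\mathcal{F}_t])$. If $\overline{X}_t$ is a martingale, the conditional expectation equals $\overline{X}_t$ and the difference is zero; if instead $f'_+(0)\geq 0$, then $f'_+(0)_-=\min(0,f'_+(0))=0$. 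So condition (2) in either form forces this contribution to vanish as well.

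Combining these observations with Proposition \ref{prop_v_a_pos} yields $V_t^{f,A}\leq V_t^f$, and together with $V_t^{f,A}\geq V_t^f$ this gives the claimed equality. There is no real obstacle here; the corollary is essentially a case-by-case inspection of which factor in each of the two product terms is the one that collapses. The only mildly delicate point is to remember that convexity of $f$ guarantees the right-derivative $f'_+(0)$ exists (possibly $-\infty$, but finiteness is implicit in the setting since $f\in\Pi_Q(X_T)$), so that the quantity $f'_+(0)_-$ appearing in the bound is well-defined; aside from this the argument is a one-line application of Proposition \ref{prop_v_a_pos}.
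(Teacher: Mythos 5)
Your proof is correct and is exactly the argument the paper intends: the corollary is stated as a direct consequence of Proposition \ref{prop_v_a_pos}, with each of the two hypotheses annihilating one factor of the corresponding correction term, combined with the trivial lower bound $V_t^{f,A}\geq V_t^f$ from choosing $\tau\equiv T$. Nothing further is needed.
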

Note that if $f$ is not linear and
the process $X_t$ has support on all of $(0,\infty)$ (e.g., geometric Brownian motion),
then the Jensen's inequality
 is strict. This implies that the optimal moment to exercise the option is the maturity $T$. 

We have analogous results for options with concave payoff function and the proofs 
are based on the fact $-\inf(-h)=\sup h$ and the arguments in the proof of Proposition \ref{prop_v_a_pos}. In this case, we obtain the conditions under which the value is the discounted instrictic value $B_t^{-1}f(X_t)$. Moreover, if the support of $X_t$ is the whole positive real line, one should exercise the option immediately. 
\begin{exm}
Consider a power option of example \ref{exm_power}.
By Corollary \ref{cor_v_a_pos}, the values of American and European options are the same.
\end{exm}
\begin{exm}
Consider a market model with $n$ risky assets $S_t^1, \ldots, S_t^n$ where $\overline{S}_t^k$ is a martingale
with respect to $Q$ for every $k$ and
consider American Basket option and American Rainbow option that is based on the
stock with the best performance.
For the first one, $\overline{X}_t = \sum_{k=1}^n \alpha_k\overline{S}_t^k$
which is clearly a martingale. For the second one, $\overline{X}_t = \max_{1\leq k \leq n} \overline{S}_t^k$
which is a submartingale (see \cite{Mania}). Hence we may apply Corollary \ref{cor_v_a_pos} to obtain conditions under which the values of American type Basket or Rainbow options equal to their European counterparts. To the best of our knowledge, this is not mentioned in the literature. 
\end{exm}
\begin{rmk}
Usually the values for American options are determined recursively by considering discrete time net $\{0=t_1< t_2<\ldots< t_k=T\}$. In other words, we have
$V_{t_k}^{f,A}=B_{t_k}^{-1}f(X_{t_k})$ and the value $V_{t_{j}}^{f,A}$ for every $j$ is given by
$$
V_{t_{j}}^{f,A} = B_{t_j}^{-1}\max\left(\E_Q[B_{t_{j+1}}V_{t_{j+1}}^{f,A}|\mathcal{F}_{t_j}],f(X_{t_j})\right).
$$
Hence, if we know the prices for European calls with different maturities and strikes and the measure $\mu$, the recursion is relatively easy to implement by 
using values $f(0)$, $f_+'(0)$ and the measure $\mu$ together with European call prices. 
\end{rmk}

\section{Conclusions}
In this article, we have derived a method to valuate wide class of options in terms of call and digital options with different strikes. In particular, typically the traded options have 
payoff functions which belong to our class. Moreover, we have derived similar formula for barrier type options. The benefit of our results is that they are model independent. Indeed, we only assumed that at least one pricing measure exists.  Our results also cover stochastic interest rate models. One can also utilise our pricing formulas to analyse the sensitivity of the price with respect to model parameters.

As an application, we have established an analytical upper bound for the value of an
American option with convex payoff $f$. Surprisingly, the value of American option is determined by the behaviour of the payoff function $f(x)$ at $x=0$. This
fact provides an easy and fast method to analyse the difference between the prices
of American and European
option. However, the upper bound can turn out to be a poor one.

Besides valuating options, it is also of interest to find the hedging strategy.
As a consequence of our results, we obtained a static hedge if the digital options are traded for all strikes. 
Especially, static hedging is of great importance in models with transaction costs. Another hedging problem is to find a dynamic hedging for option. 
Our method can be applied to this problem too, at least to find a candidate for the hedging strategy if we know the hedging strategies for call and 
digital options. However, in this case the assumptions on the model plays role again and we have to consider in which sense the 
stochastic integrals exists and does the call option approximation converge to the hedging candidate. 
For example, it is straightforward to see that following our method in the Black-Scholes model we find a hedge that is the usual delta hedging strategy.

\section{Proofs}
Proof of Theorem \ref{theor_v_general} is based on following lemmas. For
simplicity, we assume in the proofs of lemmas and main theorems
that the bond $B_t$
is a constant one for every $t$. The results
for non-constant bond $B_t$
follows directly by multiplying with $B_T^{-1}$. We also prove formulas only for $t =
0$.
The general case follows by similar arguments.
\begin{lma}
\label{lemma_v_general}
Let $\alpha\geq 0$, $\alpha<\beta<\infty$ and assume that the payoff function $g$
is of
the form $g(x)=f(x)\textbf{1}_{\alpha\leq x\leq \beta}$, where $f$ is continuous on
$[\alpha,\beta]$ and continuously differentiable on $(\alpha,\beta)$. Then the value of an option
$g(X_T)$ at $t$ is given by
\begin{equation}
\begin{split}
V_t^g &= B_T^{-1}f(\alpha)Q(X_T\geq \alpha|\mathcal{F}_t) -
B_T^{-1}f(\beta)Q(X_T > \beta|\mathcal{F}_t)\\
&- \int_\alpha^\beta f'(a) \lambda_t(\ud a).
\end{split}
\end{equation}
\end{lma}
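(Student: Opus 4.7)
The plan is to reduce to the case $B_T \equiv 1$ and $t=0$ as the author indicates, and then establish the lemma by means of a pathwise decomposition of $g$ in terms of indicator functions, after which the conditional expectation can be computed by swapping expectation with the Lebesgue integral.

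The first step is to observe the pointwise identity
\begin{equation*}
g(x) = f(\alpha)\textbf{1}_{x\geq \alpha} - f(\beta)\textbf{1}_{x > \beta} + \int_\alpha^\beta f'(a)\textbf{1}_{x > a}\,\ud a,
\end{equation*}
valid for every $x\geq 0$. I would verify this by splitting into three cases: for $x<\alpha$ all three summands vanish; for $\alpha\leq x\leq \beta$, the right-hand side reduces to $f(\alpha)+\int_\alpha^x f'(a)\,\ud a$, which equals $f(x)$ by the fundamental theorem of calculus applied to $f$ on $[\alpha,x]$ (with a care at the endpoints using the continuity of $f$ on the closed interval); for $x>\beta$, the first two terms cancel out to $f(\alpha)-f(\beta)$ and the integral equals $f(\beta)-f(\alpha)$, giving $0$ as required.

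Having this pathwise representation, I would substitute $x=X_T$, take $\E_Q[\cdot\mid\mathcal{F}_t]$, and exchange the integral with the conditional expectation via Fubini's theorem. This is legitimate because $f'$ is continuous on $(\alpha,\beta)$, the integration region $[\alpha,\beta]$ is bounded, and $Q(X_T>a\mid\mathcal{F}_t)$ is bounded by $1$; the main technical point to double-check is Fubini's integrability condition, which forces $f'$ to be integrable on $[\alpha,\beta]$. This step yields
\begin{equation*}
V_t^g = f(\alpha)Q(X_T\geq \alpha\mid\mathcal{F}_t) - f(\beta)Q(X_T > \beta\mid\mathcal{F}_t) + \int_\alpha^\beta f'(a) Q(X_T > a\mid \mathcal{F}_t)\,\ud a.
\end{equation*}

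Finally, I would invoke the absolute continuity of $a\mapsto \lambda_t(a)$ discussed right after the definition of $\Pi_Q(X_T)$: since $\lambda_t'(a) = -Q(X_T>a\mid\mathcal{F}_t)$ almost everywhere, the Lebesgue integral in the last term above equals $-\int_\alpha^\beta f'(a)\lambda_t(\ud a)$, which yields the stated formula. Multiplying through by $B_T^{-1}$ gives the general deterministic-bond version. The main obstacle is really just the bookkeeping for the pathwise identity at the boundary points $\alpha$ and $\beta$ (to make sure the strict vs.\ non-strict inequalities line up with the correct version of the distribution function), together with verifying Fubini — both routine once set up carefully.
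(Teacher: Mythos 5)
Your proof is correct, but it takes a genuinely different route from the paper's. The paper approximates $g$ on $[\alpha,\beta]$ by piecewise linear interpolants, rewrites each interpolant exactly as a portfolio $f(\alpha)\textbf{1}_{x\geq\alpha}-f(\beta)\textbf{1}_{x>\beta}+\sum_k c_k[p(x,a_k)-p(x,a_{k+1})]$ of digitals and calls, prices that portfolio, and then passes to the limit: the mean value theorem identifies $\sum_k c_k[\lambda(a_k)-\lambda(a_{k+1})]$ as a Riemann--Stieltjes sum converging to $-\int_\alpha^\beta f'(a)\lambda_t(\ud a)$, and dominated convergence handles the payoff side. You instead prove the exact pathwise identity $g(x)=f(\alpha)\textbf{1}_{x\geq\alpha}-f(\beta)\textbf{1}_{x>\beta}+\int_\alpha^\beta f'(a)\textbf{1}_{x>a}\,\ud a$, apply conditional Fubini, and then use the absolute continuity of $\lambda_t$ (the paper's identity (\ref{abs_leb})) to recast $\int_\alpha^\beta f'(a)Q(X_T>a|\mathcal{F}_t)\,\ud a$ as $-\int_\alpha^\beta f'(a)\lambda_t(\ud a)$. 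The author explicitly acknowledges this alternative in the remark following the lemma (``the result can also be obtained by using integration by parts and (\ref{abs_leb})''), preferring the approximation argument because it transparently exhibits the static hedge and transfers verbatim to the barrier case by multiplying (\ref{appro2}) by $\textbf{1}_{Y\in C}$; your route is shorter and avoids any limiting procedure, and it too adapts to barrier options by replacing $Q(X_T>a|\mathcal{F}_t)$ with $Q(X_T>a,\,Y\in C|\mathcal{F}_t)$. One caveat applies equally to both arguments: the stated hypotheses ($f$ continuous on $[\alpha,\beta]$, $C^1$ on the open interval) do not by themselves force $f'$ to be bounded or even integrable near the endpoints, which your Fubini step needs and which the paper's Riemann--Stieltjes convergence also needs; both proofs implicitly rely on the standing assumption from the definition of $\Pi_Q(X_T)$ that $f'$ has only jump discontinuities, i.e.\ finite one-sided limits at $\alpha$ and $\beta$, so this is not a defect of your argument relative to the paper's, but it is worth stating explicitly rather than leaving as a point ``to double-check.''
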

\begin{proof}
Since $g$ is continuous on $[\alpha,\beta]$, we can approximate it with 
\begin{equation}
\label{appro1}
\begin{split}
g_n(x) &= f(\alpha)\textbf{1}_{x=\alpha} + \sum_{k=1}^n (c_kx + b_k)\textbf{1}_{a_k < x \leq a_{k+1}}\\
&= f(\alpha)\textbf{1}_{x=\alpha}+\sum_{k=1}^n (c_kx + b_k)(\textbf{1}_{a_k < x}-\textbf{1}_{a_{k+1}<x}),
\end{split}
\end{equation}
where $\alpha = a_1 < a_2 < \ldots < a_{n+1} = \beta$ is a partition of the interval
$[\alpha,\beta]$ and
the
coefficients are given by
$$
c_k = \frac{f(a_{k+1}) - f(a_k)}{a_{k+1} - a_k},
$$
and
$$
b_k = f(a_{k+1}) - c_ka_{k+1} = f(a_k) - c_ka_k.
$$
The payoff of a call-option with strike $K$ is given by
$$
p(x,K) = (x-K)^+ = x\textbf{1}_{x>K} - K\textbf{1}_{x>K}.
$$
Simple computations yields
\begin{equation}
\label{appro2}
g_n(x) = f(\alpha)\textbf{1}_{x\geq \alpha} - f(\beta)\textbf{1}_{x>\beta}+ \sum_{k=1}^n
c_k\left[p(x,a_k)-p(x,a_{k+1})\right].
\end{equation}
By taking expectation with respect to the equivalent martingale
measure $Q$,
we obtain 
\begin{equation*}
\begin{split}
V^{g_n} &= f(\alpha)Q(X_T \geq \alpha) - f(\beta)Q(X_T >\beta) \\
&+ \sum_{k=1}^n c_k \left[\lambda(a_k) - \lambda(a_{k+1})\right].
\end{split}
\end{equation*}
Now $g_n$ converges to $g$
pointwise and by mean value theorem,
$$
\sum_{k=1}^n c_k \left[\lambda(a_k)
- \lambda(a_{k+1})\right] \rightarrow
- \int_0^b f'(a)\lambda(\ud a)
$$
as $n$ tends to infinity. Applying dominated convergence theorem completes the proof.
\end{proof}
\begin{rmk}
Note that the result can also be obtained by using integration by parts and (\ref{abs_leb}). However, the proof represented here is easier to understand especially in 
the case of barrier options.
\end{rmk}
\begin{lma}
\label{rema_apu}
Let $\alpha\geq 0$, $\alpha<\beta<\infty$ and assume that the payoff function $g$
is of
the form $g^0(x) = f(x)\textbf{1}_{\alpha< x < \beta}$, where $f$ is continuous on
$[\alpha,\beta]$ and continuously differentiable on $(\alpha,\beta)$. Then the value of an option
$g^0(X_T)$ at $t$ is given by
\begin{equation}
\begin{split}
V_t^{g^0} &= B_T^{-1}f(\alpha+)Q(X_T> \alpha|\mathcal{F}_t) - B_T^{-1}f(\beta-)Q(X_T \geq
\beta|\mathcal{F}_t)\\
&- \int_\alpha^\beta f'(a) \lambda_t(\ud a).
\end{split}
\end{equation}
\end{lma}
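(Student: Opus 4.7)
The cleanest route is to deduce Lemma \ref{rema_apu} directly from Lemma \ref{lemma_v_general}. Set $g(x) = f(x)\textbf{1}_{\alpha\le x\le\beta}$, the payoff handled by the previous lemma, and observe the pointwise identity
\begin{equation*}
g(x) - g^0(x) = f(\alpha)\textbf{1}_{x=\alpha} + f(\beta)\textbf{1}_{x=\beta},
\end{equation*}
which differs from $g^0$ only on the two single points $\alpha$ and $\beta$. Since $f$ is continuous on $[\alpha,\beta]$, the values $f(\alpha)$ and $f(\beta)$ coincide with $f(\alpha+)$ and $f(\beta-)$ appearing in the statement.

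Next I would pass to conditional expectations. Taking $\E_Q[\cdot|\mathcal{F}_t]$ of $g(X_T)$ and $g^0(X_T)$, linearity gives
\begin{equation*}
V_t^{g^0} = V_t^g - B_T^{-1}f(\alpha)Q(X_T=\alpha|\mathcal{F}_t) - B_T^{-1}f(\beta)Q(X_T=\beta|\mathcal{F}_t).
\end{equation*}
Inserting the formula from Lemma \ref{lemma_v_general} for $V_t^g$ and using the elementary identities
\begin{equation*}
Q(X_T\ge\alpha|\mathcal{F}_t) - Q(X_T=\alpha|\mathcal{F}_t) = Q(X_T>\alpha|\mathcal{F}_t),
\end{equation*}
\begin{equation*}
Q(X_T>\beta|\mathcal{F}_t) + Q(X_T=\beta|\mathcal{F}_t) = Q(X_T\ge\beta|\mathcal{F}_t),
\end{equation*}
the boundary contributions rearrange into exactly $B_T^{-1}f(\alpha+)Q(X_T>\alpha|\mathcal{F}_t) - B_T^{-1}f(\beta-)Q(X_T\ge\beta|\mathcal{F}_t)$, while the Riemann--Stieltjes integral term is untouched. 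This yields the claimed formula.

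There is essentially no obstacle here: both payoffs are bounded and differ on a set of Lebesgue-null structure in $x$, so all appeals to dominated convergence and to the properties of $\lambda_t(a)$ are already covered by the proof of Lemma \ref{lemma_v_general}. The only point worth flagging is that one must not ignore potential atoms of $Q(X_T\in\cdot\,|\mathcal{F}_t)$ at $\alpha$ and $\beta$; these atoms are exactly what shifts $Q(X_T\ge\alpha|\mathcal{F}_t)$ to $Q(X_T>\alpha|\mathcal{F}_t)$ and $Q(X_T>\beta|\mathcal{F}_t)$ to $Q(X_T\ge\beta|\mathcal{F}_t)$ in the final expression, making the open-interval indicator a genuinely different payoff from the closed-interval one.
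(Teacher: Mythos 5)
Your argument is correct and is essentially the paper's own proof: both reduce to Lemma \ref{lemma_v_general} applied to the closed-interval payoff and then subtract the two point-mass payoffs at $\alpha$ and $\beta$, converting $Q(X_T\geq\alpha|\mathcal{F}_t)$ into $Q(X_T>\alpha|\mathcal{F}_t)$ and $Q(X_T>\beta|\mathcal{F}_t)$ into $Q(X_T\geq\beta|\mathcal{F}_t)$. The only cosmetic difference is that the paper first redefines the boundary values of $g$ via the one-sided limits $f(\alpha+)$, $f(\beta-)$, which under the stated continuity of $f$ on $[\alpha,\beta]$ coincides with your choice.
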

\begin{proof}
Define a new function $g$ by
\begin{equation*}
g(x) = \begin{cases}
f(\alpha+),& x = \alpha\\
f(x),& x \in (\alpha,\beta)\\
f(\beta-),& x = \beta.
\end{cases}
\end{equation*}
By Lemma \ref{lemma_v_general}, we 
obtain
\begin{equation*}
\begin{split}
V_t^g &= B_T^{-1}f(\alpha)Q(X_T\geq \alpha|\mathcal{F}_t) - B_T^{-1}f(\beta)Q(X_T >
\beta|\mathcal{F}_t) \\
&- \int_\alpha^\beta f'(a) \lambda_t(\ud a).
\end{split}
\end{equation*}
Noting that $g^0(x) = g(x) - g(\alpha)\textbf{1}_{x=\alpha}-g(\beta)\textbf{1}_{x=\beta}$ we obtain the result.
\end{proof}
\begin{proof}[Proof of Theorem \ref{theor_v_general}]
Put
$g_b(x) = f(x)\textbf{1}_{0\leq x < b}$, where $s_{n+1}=b$.
By assumptions, we may write
$$
g_b(x) = \sum_{k=0}^n f(x)\textbf{1}_{s_k<x<s_{k+1}} + \sum_{k=0}^n f(x)\textbf{1}_{x=s_k}.
$$
For terms on the
first sum we obtain by Lemma \ref{rema_apu}
\begin{equation*}
\begin{split}
 V_t^{g_b} &= B_T^{-1}f(0) - \int_0^b f'(a)\lambda_t(\ud
a)\\
&+ B_T^{-1}\sum_{k=0}^{n}\Delta_-f(s_k)Q(X_T\geq s_k|\mathcal{F}_t)\\
&+ B_T^{-1}\sum_{k=0}^{n}\Delta_+f(s_k)Q(X_T> s_k|\mathcal{F}_t)\\
&- B_T^{-1}f(b-)Q(X_T\geq b|\mathcal{F}_t).
\end{split}
\end{equation*}
Letting $b\rightarrow\infty$, applying dominated convergence theorem and taking account the assumptions $(3)$ and $(4)$ on $\Pi_Q(X_T)$ we obtain the result.
\end{proof}
\begin{proof}[Proof of Theorem \ref{theor_barrier}]
By multiplying (\ref{appro2}) with $\textbf{1}_{Y\in C}$ and following proofs of 
Lemma \ref{lemma_v_general} and Lemma \ref{rema_apu} we obtain similar pricing formulas for options
$g(X_T)\textbf{1}_{Y\in C}$ and $g^0(X_T)\textbf{1}_{Y\in C}$. Hence, by following the proof of Theorem \ref{theor_v_general}, we obtain the result.
\end{proof} 
\begin{proof}[Proof of Proposition \ref{prop_v_a_pos}]
The proof of \ref{prop_v_a_pos} relies on the well-known result for call options and the representation (\ref{rep_conv}).
Therefore we
begin with introducing the result for call options. For the proof, we refer to \cite{Chesney}.
\begin{lma}
\label{lem_call_a}
Assume that the discounted underlying asset $\overline{X}_t$ is a submartingale. Then the
value at time $t$ of an
American call option $(X_t - K)^+$ with terminal time $T$ is the same as the value at
time $t$ of a European call
option $(X_T - K)^+$.
\end{lma}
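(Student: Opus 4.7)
The plan is to establish the two inequalities $V_t^{f,A}\geq V_t^f$ and $V_t^{f,A}\leq V_t^f$ separately. The first is immediate: the American value is the supremum $V_t^{f,A}=\sup_{\tau}\E_Q[B_\tau^{-1}(X_\tau-K)^+\mid\mathcal{F}_t]$ over stopping times $\tau\in[t,T]$, and choosing $\tau=T$ recovers the European value. The real content is the reverse inequality, and my strategy is to prove it pointwise by showing, for every stopping time $\tau\in[t,T]$,
\begin{equation*}
B_\tau^{-1}(X_\tau-K)^+ \leq \E_Q\!\left[B_T^{-1}(X_T-K)^+\bigm|\mathcal{F}_\tau\right],
\end{equation*}
after which the tower property finishes the argument upon conditioning on $\mathcal{F}_t$ and taking the supremum.

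To prove this pointwise inequality I would exploit two monotonicity features. First, since $B_t$ is non-decreasing and $K\geq 0$, we have $B_\tau^{-1}K\geq B_T^{-1}K$, which combined with $B_\tau^{-1}X_\tau=\overline{X}_\tau$ gives
\begin{equation*}
B_\tau^{-1}(X_\tau-K)^+ = (\overline{X}_\tau-B_\tau^{-1}K)^+ \leq (\overline{X}_\tau-B_T^{-1}K)^+.
\end{equation*}
Second, since $\overline{X}$ is a $Q$-submartingale,
\begin{equation*}
\overline{X}_\tau - B_T^{-1}K \leq \E_Q\!\left[\overline{X}_T - B_T^{-1}K \bigm|\mathcal{F}_\tau\right],
\end{equation*}
and applying the non-decreasing convex function $x\mapsto x^+$ to both sides together with conditional Jensen yields
\begin{equation*}
(\overline{X}_\tau-B_T^{-1}K)^+ \leq \E_Q\!\left[(\overline{X}_T-B_T^{-1}K)^+\bigm|\mathcal{F}_\tau\right] = \E_Q\!\left[B_T^{-1}(X_T-K)^+\bigm|\mathcal{F}_\tau\right].
\end{equation*}
Concatenating these two estimates gives exactly the inequality needed. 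The use of optional stopping for the submartingale $\overline{X}$ at the bounded stopping time $\tau\leq T$ is legitimate provided $\overline{X}_T\in L^1(Q)$, which is implicit in the option being priced.

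The one subtle point, and where I expect the main obstacle to lie, is the interplay between the discount factor and the positive part: naively one would like to write $B_\tau^{-1}(X_\tau-K)^+$ as a convex function of $\overline{X}_\tau$ and apply Jensen directly, but the ``strike'' $B_\tau^{-1}K$ is itself time-dependent and random in the stochastic-bond generalisation. The monotonicity $B_\tau^{-1}\geq B_T^{-1}$ is precisely what handles this; it is also the reason the argument requires $K\geq 0$ and a non-decreasing bond (the hypotheses under which the result is traditionally stated). Once this step is cleanly handled, the rest is a routine combination of the submartingale property and Jensen's inequality for the convex, non-decreasing function $x\mapsto x^+$.
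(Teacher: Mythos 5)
Your proof is correct, and it is worth noting that the paper itself does not prove this lemma at all: it simply refers the reader to Chesney, Jeanblanc-Picqu\'e and Yor, so your argument supplies the details the paper outsources. The route you take is the classical one that the cited text uses: the inequality $V_t^{f,A}\geq V_t^f$ by choosing $\tau=T$, and for the converse the pointwise bound $B_\tau^{-1}(X_\tau-K)^+\leq\E_Q[B_T^{-1}(X_T-K)^+\mid\mathcal{F}_\tau]$ via (i) the monotonicity $B_\tau^{-1}K\geq B_T^{-1}K$ (here $K\geq 0$ and the bond nondecreasing with $B_0=1$, both standing assumptions of the paper), (ii) optional sampling for the submartingale $\overline{X}$ at the bounded stopping time $\tau$, and (iii) conditional Jensen combined with monotonicity of $x\mapsto x^+$, correctly ordered as $(\overline{X}_\tau-B_T^{-1}K)^+\leq\bigl(\E_Q[\overline{X}_T-B_T^{-1}K\mid\mathcal{F}_\tau]\bigr)^+\leq\E_Q[(\overline{X}_T-B_T^{-1}K)^+\mid\mathcal{F}_\tau]$; pulling the constant $B_T^{-1}>0$ out of the positive part then gives the claim, and the tower property finishes. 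Your identification of the time-dependent discounted strike as the crux, resolved by monotonicity rather than a direct Jensen on a random-strike convex function, is exactly right. The only point left slightly implicit is that optional sampling for a continuous-time submartingale needs more than $\overline{X}_T\in L^1(Q)$: one needs a right-continuous modification under the usual conditions (or a discrete-time reduction); this is a standard regularity assumption the paper is equally silent about, so it is a cosmetic rather than substantive gap.
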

By Tonelli's Theorem, Lemma \ref{lem_call_a} and representation (\ref{rep_conv}) we obtain
\begin{equation*}
\begin{split}
&\sup_{\tau\in[t,T]}\E_Q[B_{\tau}^{-1}f(X_{\tau})|\mathcal{F}_t] \\
&\leq \sup_{\tau\in[t,T]}\E_Q[B_{\tau}^{-1}f(0)|\mathcal{F}_t] +
\sup_{\tau\in[t,T]}\E_Q[f'_+(0)B_{\tau}^{-1}
X_{\tau}|\mathcal{F}_t] \\
&+ \sup_{\tau\in[t,T]}\int_0^\infty\E_Q[B_{\tau}^{-1}(X_{\tau}-K)^+|\mathcal{F}_t]\mu(\ud
a)\\
&\leq \sup_{\tau\in[t,T]}\E_Q[B_{\tau}^{-1}f(0)|\mathcal{F}_t] +
\sup_{\tau\in[t,T]}\E_Q[f'_+(0)B_{\tau}^{-1}
X_{\tau}|\mathcal{F}_t] \\
&+ \int_0^\infty\sup_{\tau\in[t,T]}\E_Q[B_{\tau}^{-1}(X_{\tau}-K)^+|\mathcal{F}_t]\mu(\ud
a)\\
&= \sup_{\tau\in[t,T]}\E_Q[B_{\tau}^{-1}f(0)|\mathcal{F}_t] +
\sup_{\tau\in[t,T]}\E_Q[f'_+(0)\overline{X}_{\tau}
|\mathcal{F}_t] \\
&+ \int_0^\infty\E_Q[B_T^{-1}(X_T-K)^+|\mathcal{F}_t]\mu(\ud a).
\end{split}
\end{equation*}
Thus, by using (\ref{v_conv_e}), we obtain
\begin{equation*}
\begin{split}
V_t^{f,A} -V_t^f &\leq 
\sup_{\tau\in[t,T]}\E_Q[B_{\tau}^{-1}f(0)|\mathcal{F}_t] - B_T^{-1}f(0)\\
 &+
\sup_{\tau\in[t,T]}\E_Q[f'_+(0)\overline{X}_{\tau}|\mathcal{F}_t] -
\E_Q[f'_+(0)\overline{X}_T|\mathcal{F}_t].
\end{split}
\end{equation*}
Evidently, 
$$
\sup_{\tau\in[t,T]}\E_Q[B_{\tau}^{-1}f(0)|\mathcal{F}_t] - B_T^{-1}f(0) = f(0)_+(B_t^{-1}-B_T^{-1}).
$$
Similarly, by the submartingale property of $\overline{X}_t$, we obtain
$$
\sup_{\tau\in[t,T]}\E_Q[f'_+(0)\overline{X}_{\tau}|\mathcal{F}_t] -
\E_Q[f'_+(0)\overline{X}_T|\mathcal{F}_t] = f'_+(0)_-(\overline{X}_t-\E_Q[\overline{X}_T|\mathcal{F}_t]).
$$
\end{proof}

\vskip0.5cm

\textbf{Acknowledgements}.\\
The author thanks Esko Valkeila and Jussi Keppo for discussions and comments which improved the paper. The author thanks the Finnish Doctoral Programme in Stochastics and Statistics for financial support.





\end{document}